\newlength{\fitchlineht}
\newlength{\fitchindent}
\newlength{\fitchcomind}
\newlength{\fitchnumwd}
\newcommand\fvline[1][\arrayrulewidth]{\vrule\@height.5\fitchlineht\@width#1\relax}
\newcommand{\fa}{\vline\hspace*{\fitchindent}}
\newcommand{\fh}{\fvline%
  \makebox[0pt][l]{{%
      \raisebox{-1.4ex}[0pt][0pt]{\rule{1.5em}{\arrayrulewidth}}}}%
  \hspace*{\fitchindent}}
\newcounter{fitchcounter}
\newcommand{\formatfitchcounter}[1]{\arabic{#1}}
\newcommand{\fitchcounter}{%
  \ifthenelse{\boolean{increasefitchcounter}}{\addtocounter{fitchcounter}{1}}{}
  \formatfitchcounter{fitchcounter}}
\newenvironment{fitchnum}%
{\ifthenelse{\boolean{resetfitchcounter}}{\setcounter{fitchcounter}{0}}{}
  \begin{tabular}{!{\makebox[\fitchnumwd][r]{\fitchcounter }\hspace{\fitchindent}}Ml@{\hspace{\fitchcomind}}l}}%
{\end{tabular}}
\newenvironment{fitchunum}%
{\begin{tabular}{!{\makebox[\fitchnumwd][r]{}\hspace{\fitchindent}}Ml@{\hspace{\fitchcomind}}l}}%
{\end{tabular}}
\newenvironment{fitch}{
  \begin{fitchnum}}{\end{fitchnum}}
\newenvironment{fitch*}{
  \begin{fitchunum}}{\end{fitchunum}}
\notag\end{eqnarray}}
\newcommand{\IMPL}{\rightarrow}
\newcommand{\AND}{\wedge}
\newcommand{\OR}{\vee}
\newcommand{\NOT}{\neg}
\newcommand{\POSS}{\lozenge}
\newtheorem{theorem}{Theorem}
\newtheorem{lemma}{Lemma}
\newtheorem{proposition}{Proposition}
\title{Natural Deduction for Assertibility and Deniability\thanks{This paper is an outcome of the project Logical Structure of Information Channels, no. 21-23610M, supported by the Czech Science Foundation and realized at the Institute of Philosophy of the Czech Academy of Sciences.}}
\author{V\'it Pun\v coch\'a\v r
\institute{Institute of Philosophy, \\Czech Academy of Sciences, \\Czech Republic}
\email{puncochar@flu.cas.cz}
\and
Berta Grimau
\institute{Institute of Philosophy, \\Czech Academy of Sciences, \\Czech Republic}
\email{grimau@utia.cas.cz}
}
\begin{document}
\maketitle

\begin{abstract}
In this paper we split every basic propositional connective into two versions, one is called extensional and the other one intensional. The extensional connectives are semantically characterized by standard truth conditions that are relative to possible worlds.  Assertibility and deniability of sentences built out of atomic sentences by extensional connectives are defined in terms of the notion of truth.  The intensional connectives are characterized directly by assertibility and deniability conditions without the notion of truth. We pay special attention to the deniability condition for intensional implication. We characterize the logic of this mixed language by a system of natural deduction that sheds some light on the inferential behaviour of these two kinds of connectives and on the way they can interact. 
\end{abstract}

\section{Introduction}

Christopher Gauker, in his book \cite{gauker05}, put forward an interesting theory of conditionals based on the notions of assertibility and deniability in a context. A simplification of Gauker's theory was proposed in  \cite{puncochar16} where it was shown that the characteristic features of the logic determined by Gauker's semantics are preserved even if we replace Gauker's  rather complex notion of context with a much simpler one, according to which contexts are represented by sets of possible worlds. Besides simplicity the proposed modification has some further nice properties that the original Gauker's theory lacks, concerning a simple form of compositionality, validity of some plausible argument forms, or simple treatment of conditionals embedded in antecedents of other conditionals (for more details, see \cite{puncochar16}).

A peculiar feature of Gauker's theory is that it reflects one rather surprising phenomenon: Logical operators are sensitive to the types of sentences they are applied to. For example, disjunction of two conditional sentences seems to behave differently than disjunction of two elementary sentences. The original semantics from \cite{gauker05} incorporates this ambiguity directly into the formal language. This approach was further elaborated, especially from the philosophical point of view, in \cite{puncochargauker20}. The semantics from \cite{puncochar16} is designed in a different way. It disambiguates the behaviour of logical connectives at the level of the formal language by replacing each connective with two operators, one ``extensional'' and one ``intensional''. It is argued in \cite{puncochar16} that  while the original Gauker's approach allows us to model some phenomena  in a straightforward way, it is technically less elegant than the disambiguation strategy. 

In this paper, we will focus on the semantics of extensional and intensional connectives proposed in \cite{puncochar16} and we will study the logic determined by this framework. Let us call it \textit{the Logic of Assertibility and Deniability}, or \textsf{LAD}, for short. \cite{puncochar16} provided a syntactic characterization of \textsf{LAD}, but only an indirect one, via a translation into a modal logic. The main contribution of this paper is a direct syntactic characterization of \textsf{LAD}. We will show that this logic can be characterized by an elegant system of natural deduction that clarifies the inferential behaviour of both extensional and intensional operators.




\section{Extensional and intensional connectives}\label{extint}

Consider the following scenario. A murder has been committed and we have four suspects. The murderer is not known but it is settled that it must be someone among these suspects. It is also clear that only one person has committed the crime. We have the following description of the suspects:

\vspace{6pt}

\begin{tabular}{ll}
tall man with dark hair and moustache & tall man with blond hair and without moustache \\
short man with blond hair and moustache & short man with dark hair and without moustache \\
\end{tabular}

\vspace{6pt}

Now it seems that in this context one is entitled to assert the premises but not the conclusion of the following argument that has a seemingly valid form:

\vspace{6pt}

\begin{tabular}{ll}
Premise 1 &\textit{The murderer is either tall or short.} \\
Premise 2 &\textit{If the murderer is tall then he has a moustache if he has dark hair.} \\
Premise 3 &\textit{If the murderer is short then  he has a moustache if he has blond hair.} \\
\hline Conclusion &\textit{It either holds  that the murderer has a moustache if he has dark hair} \\
& \textit{or that the murderer has a moustache if he has blond hair.} \\
\end{tabular}

\vspace{6pt}

The assertibility of the premisses is clear from the description of the context. But the conclusion does not seem to be assertible. As a reason for this intuition one might say that none of the two disjuncts in the conclusion  is assertible in the context. But the same holds for the first premise, none of the disjuncts is assertible in the context and yet the whole disjunction clearly is. This puzzling phenomenon illustrates what we vaguely described in the introduction as the sensitivity of logical operators. The first premise says that in each possibility (``possible world''), one of the disjuncts is true. The disjunction connecting two conditionals in the conclusion says something different, namely that at least one of the two conditionals holds with respect to the whole context. We can describe these two cases as involving two different logical operators, one operating on the level of possible worlds and the other operating on the level of the whole context. One can easily formulate similar examples that involve negation of conditionals (see \cite{puncochar16}). For a discussion of similar phenomena, see, e.g. \cite{yalcin12} or \cite{bledin14}. 

Such examples motivate splitting each of the basic logical connectives (implication, conjunction, disjunction, and negation) into two versions, one will be called ``extensional'' and the other one ``intensional'': 

\vspace{6pt}

\begin{tabular}{lcccc}
Extensional connectives: & $\supset$ & $\cap$ & $\cup$ & $\sim$ \\
Intensional connectives: & $\IMPL$ & $\AND$ & $\OR$ & $\NOT$ \\
\end{tabular}

\vspace{6pt}


Let $L$ be the language containing all atomic formulas and all formulas which can be constructed out of the atomic formulas using the extensional connectives. The language $L^*$ is the smallest set of formulas containing all $L$-formulas and closed under the application of the intensional connectives. The two languages can be concisely introduced as follows:

\vspace{6pt}

\begin{tabular}{lc}
$L$: & $\alpha= p \mid \alpha \supset \alpha \mid \alpha \cap \alpha \mid \alpha \cup \alpha \mid {\sim}\alpha$\\
$L^*$: & $\varphi= \alpha \mid \varphi \IMPL \varphi \mid \varphi \AND \varphi \mid \varphi \OR \varphi \mid \NOT \varphi$
\end{tabular}

\vspace{6pt}

The argument above would be formalized in the language $L^*$ in the following way (note that the language forbids us from connecting two intensional implications by the extensional disjunction):

\vspace{6pt}

\begin{tabular}{l}
 $p \cup q, p \IMPL (r \IMPL t), q \IMPL (s \IMPL t) / (r \IMPL t) \OR (s \IMPL t)$
\end{tabular}


\vspace{6pt}

The Greek letters $\alpha, \beta, \gamma$ will range over $L$-formulas and the letters $\varphi, \psi, \chi, \vartheta$ over $L^*$-formulas. 
Let $A$ be a set of atomic formulas. A possible $A$-world is any function that assigns a truth value ($1$~representing \textit{truth} or $0$ representing \textit{falsity}) to every atomic formula from $A$. An $A$-context is a nonempty set of possible $A$-worlds. The specification of the set $A$ will be omitted if no confusion arises. Note that, by definition, there cannot be two different possible $A$-worlds in which exactly the same atomic formulas are true.

For the complex formulas of the language $L$, the truth and falsity conditions with respect to individual possible worlds are those of classical propositional logic. $\Vdash^{+}$ and $\Vdash^{-}$ will now respectively stand for the relations of assertibility and deniability between contexts and formulas of the language $L^*$. The assertibility and deniability conditions taken from \cite{puncochar16} (and motivated by \cite{gauker05})   are defined in the following way:

\vspace{6pt}

\begin{tabular}{l}
$C \Vdash^{+} \alpha$ iff for all $w \in C$, $\alpha$ is true in $w$.\\
$C \Vdash^{-} \alpha$ iff for all $w \in C$, $\alpha$ is false in $w$.\\
$C \Vdash^{+} \NOT\varphi$ iff $C \Vdash^{-} \varphi$.\\
$C \Vdash^{-} \NOT \varphi$ iff $C \Vdash^{+} \varphi$.\\
$C \Vdash^{+} \varphi \OR \psi$ iff $C \Vdash^{+} \varphi$ or $C \Vdash^{+} \psi$.\\
$C \Vdash^{-} \varphi \OR \psi$ iff $C \Vdash^{-} \varphi$ and $C \Vdash^{-} \psi$.\\
$C \Vdash^{+} \varphi \AND \psi$ iff $C \Vdash^{+} \varphi$ and $C \Vdash^{+} \psi$.\\
$C \Vdash^{-} \varphi \AND \psi$ iff $C \Vdash^{-} \varphi$ or $C \Vdash^{-} \psi$.\\
$C \Vdash^{+} \varphi \rightarrow \psi$ iff $D \Vdash^{+} \psi$ for all nonempty $D \subseteq C$, such that $D \Vdash^{+} \varphi$.\\
$C \Vdash^{-} \varphi \rightarrow \psi$ iff $D \Vdash^{-} \psi$ for some nonempty $D \subseteq C$, such that $D \Vdash^{+} \varphi$.\\
\end{tabular}

\vspace{6pt}

The consequence relation is defined as assertibility preservation. That is, a set of $L^*$-formulas $\Delta$ entails an $L^*$-formula $\psi$ (symbolically $\Delta \vDash \psi$) if $\psi$  is assertible in every context in which all formulas from $\Delta$ are assertible. This consequence relation represents what we here call  \textit{the Logic of Assertibility and Deniability} (\textsf{LAD}). We say that two $L^*$-formulas, $\varphi$  and $\psi$, are (logically) equivalent in this logic (symbolically $\varphi \equiv \psi$) if they are assertible in the same contexts, that is, if $\varphi \vDash \psi$  and $\psi \vDash \varphi$. They are strongly equivalent (symbolically $\varphi \rightleftharpoons \psi$) if they are not only assertible but also deniable in the same contexts, that is, if $\varphi \equiv \psi$ and $\NOT \varphi \equiv  \NOT \psi$. It can be shown by induction that strongly equivalent formulas are universally interchangeable (this does not hold for mere equivalence). Note that the Gauker's original semantics from \cite{gauker05} does not have this property.
\begin{proposition}
Assume that $\varphi \rightleftharpoons \psi$, and $\varphi$  occurs as a subformula in $\chi$. If $\chi'$ is the result of replacing an occurrence of the subformula $\varphi$  in $\chi$ with $\psi$ then $\chi \rightleftharpoons \chi'$.
\end{proposition}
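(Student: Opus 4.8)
The plan is to argue by structural induction on $\chi$, keeping the occurrence of $\varphi$ to be replaced fixed throughout. The base case is when the displayed occurrence is all of $\chi$, i.e. $\chi = \varphi$; then $\chi' = \psi$ and the conclusion $\chi \rightleftharpoons \chi'$ is exactly the hypothesis $\varphi \rightleftharpoons \psi$. For the inductive step I split according to the outermost connective of $\chi$: either it is one of the four intensional connectives, or (at the base of the $L^*$-grammar) $\chi$ is a purely extensional $L$-formula and the occurrence of $\varphi$ is proper.

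For the intensional connectives the argument is uniform. Suppose, say, $\chi = \chi_1 \AND \chi_2$ and the occurrence lies in $\chi_1$, so that $\chi' = \chi_1' \AND \chi_2$ with $\chi_1'$ obtained from $\chi_1$ by the replacement (the case of $\chi_2$, and of $\NOT, \OR, \IMPL$, is analogous). By the induction hypothesis $\chi_1 \rightleftharpoons \chi_1'$, that is, $C \Vdash^{+} \chi_1$ iff $C \Vdash^{+} \chi_1'$ and $C \Vdash^{-} \chi_1$ iff $C \Vdash^{-} \chi_1'$, for every context $C$. Reading off the clauses, both the assertibility and the deniability value of $\chi$ at any $C$ are computed solely from the assertibility and deniability values of $\chi_1$ and $\chi_2$ at contexts, so they are unchanged when $\chi_1$ is replaced by $\chi_1'$; hence $\chi \rightleftharpoons \chi'$. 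The only clause needing a word of care is intensional implication, whose conditions quantify over nonempty $D \subseteq C$: but the induction hypothesis gives agreement of $\Vdash^{+}$ and $\Vdash^{-}$ at every context, in particular at every such $D$, so the equivalence still goes through. It is precisely here that strong equivalence, rather than mere equivalence, is indispensable: the clause for $C \Vdash^{-} (\varphi \IMPL \psi)$ feeds the deniability of the consequent and the assertibility of the antecedent into the deniability of the whole, and the clauses for $\NOT$ swap $\Vdash^{+}$ and $\Vdash^{-}$, so preservation of assertibility alone would not suffice.

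There remains the case in which $\chi$ is an $L$-formula and the occurrence is proper, which forces $\varphi$ (and, by well-formedness of $\chi'$, also $\psi$) to be an $L$-formula built from the extensional connectives. For this fragment I first record the bridging observation that, for $L$-formulas $\alpha$ and $\beta$, $\alpha \rightleftharpoons \beta$ holds iff $\alpha$ and $\beta$ are true in exactly the same possible worlds. The direction from world-wise agreement to $\rightleftharpoons$ is immediate from the assertibility and deniability clauses for $L$-formulas; for the converse one evaluates at singleton contexts $\{w\}$, where $\{w\} \Vdash^{+} \alpha$ iff $\alpha$ is true in $w$, so sameness of assertibility across all contexts already forces sameness of truth value world by world. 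Since the classical truth value of an $L$-formula at a world is a fixed truth function of the truth values of its immediate subformulas at that world, a further routine structural induction inside $L$ shows that replacing $\varphi$ by a formula with the same truth value in every world leaves the truth value of $\chi$ unchanged in every world; by the bridging observation this yields $\chi \rightleftharpoons \chi'$.

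The routine parts are the per-connective verifications and the classical compositionality of truth. The genuine content, and the step I expect to be the crux, is the interface between the two semantic levels: strong equivalence is a context-level notion, whereas the extensional fragment is governed by truth at individual worlds, and the argument only closes because contexts may be taken to be arbitrary nonempty sets of worlds --- in particular singletons --- which is what lets one recover world-wise truth agreement from context-wise strong equivalence.
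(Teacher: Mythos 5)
Your proof is correct and follows essentially the approach the paper has in mind (the paper only remarks that the claim ``can be shown by induction''): structural induction on $\chi$, using the full strength of strong equivalence for the intensional clauses (in particular for $\NOT$ and the deniability clause of $\IMPL$), and reducing the extensional case to world-wise truth agreement via singleton contexts. Your explicit handling of the $L$-fragment interface and of why mere equivalence would not suffice fills in exactly the details the paper leaves implicit.
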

Note that the assertibility clause for disjunction formally corresponds to the support condition for inquisitive disjunction used in inquisitive semantics (see \cite{ciardelliroelofsen11}). However, in contrast to inquisitive semantics we do not treat intensional disjunction as a question generating operator but as a statement generating operator (e.g., the formula $(r \IMPL t) \OR (s \IMPL t)$ in the conclusion of our example does not represent a question but a statement). The assertibility clauses for conjunction and implication also correspond to the support clauses for these connectives in inquisitive semantics. Technically speaking, if negation is omitted, \textsf{LAD} just corresponds to inquisitive logic (which is axiomatized for example in \cite{ciardelliroelofsen11}). What is different from the standard inquisitive logic is our treatment of negation via the deniability conditions. The deniability conditions for (intensional) negation, disjunction and conjunction are the standard ones, typically used in a bilateralist setting (see, e.g., \cite{odintsov13}).  The most tricky case is the deniability condition for implication. There are two natural candidates for an alternative deniability clause:

\vspace{6pt}

\begin{tabular}{ll}

(A) & $C \Vdash^{-} \varphi \rightarrow \psi$ iff $C \Vdash^{+} \varphi$ and $C \Vdash^{-} \psi$.\\
(B) & $C \Vdash^{-} \varphi \rightarrow \psi$ iff $D \Vdash^{-} \psi$ for all nonempty $D \subseteq C$, such that $D \Vdash^{+} \varphi$.\\
\end{tabular}

\vspace{6pt}

The option (A) is known from Nelson logic (see \cite{odintsov13}). However, this option licenses the inference from $\NOT (\varphi \IMPL \psi)$ to $\varphi$ (and also to $\NOT \psi$), which we found highly problematic from the natural language point of view (consider the following argument: \textit{It is not the case that if I die today, I will be living tomorrow. Therefore I will die today.}). The option (B) looks much more plausible. This option is known from connexive logic (see \cite{wansing21}). It makes  $\NOT (\varphi \IMPL \psi)$ strongly equivalent to $\varphi \IMPL \NOT \psi$. This is indeed a reasonable way to read negations of conditionals. This clause was explored in more detail within the semantics of assertibility and deniability in \cite{puncochar14}. 

In our current setting the clause (B) would allow us to completely eliminate intensional negations from any $L^*$-formula. Since we also have double negation and DeMorgan laws guaranteed by the other semantic clauses, we could push all the intensional negations occurring in a formula to the subformulas that are already in the language $L$. For example, in the formula $\NOT (\NOT (p \AND {\sim}s) \IMPL (p \cup {\sim}q))$ we could push the negations inside the formula in the following way: $(\NOT p \OR \NOT{\sim}s) \IMPL \NOT (p \cup {\sim}q)$. Moreover, observe that for any $L$-formula $\alpha$ we have $\NOT \alpha \rightleftharpoons {\sim}\alpha$, so when pushed so that it applies to an $L$-formula, intensional negation can be replaced with extensional negation (we would obtain $({\sim}p \OR {\sim}{\sim}s) \IMPL {\sim}(p \cup {\sim}q)$, in our example). Hence, if the deniability clause (B) for implication is employed, every occurrence of intensional negation in any $L^*$-formula  can be dissolved and thus intensional negation does not add any extra expressive power.

Aside from making intensional negation redundant, the clause (B) has one other questionable consequence. If this clause is employed, there are formulas that are assertible as well as deniable in some contexts. There are even formulas that are assertible as well as deniable in every context and thus there are formulas of the form $\varphi \AND \NOT \varphi$ regarded as logically valid. A concrete example is obtained when we substitute $(p \AND \NOT p) \IMPL (p \OR \NOT  p)$ for $\varphi$. Hence, the resulting logic is not only paraconsistent but it is in this sense inconsistent (see \cite{wansing21} for a more detailed discussion of this phenomenon). From a certain point of view, this may not be conceived as a substantial problem. However, in this paper, we want to maintain consistency and thus we reject clause (B) in favour of the clause stated above according to which a conditional is deniable in a context $C$ if and only if there is some subcontext of $C$ in which the antecedent is assertible and the consequent is deniable. This deniability clause corresponds to the one proposed in Gauker's book \cite{gauker05}. It can be viewed as a weakening of the strong clause (A) in the sense that $\NOT (\varphi \IMPL \psi)$ is not equivalent to $\varphi \AND \NOT \psi$  but rather to a mere possibility of $\varphi \AND \NOT \psi$. This way of denying conditionals seems to have strong support from natural language. For example, with respect to the context above, one can deny the claim \textit{if the murderer is tall, he has dark hair} because it is possible (from the perspective of the context) that the murderer is tall and does not have dark hair. In comparison to (B) we can observe that the clause (C) allows us to prove by induction the following fact. 
\begin{proposition}
There is no context $C$ and no $L^*$-formula $\varphi$  such that $C \Vdash^+ \varphi$ and $C \Vdash^- \varphi$.
\end{proposition}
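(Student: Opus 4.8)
The plan is to prove the statement by structural induction on $\varphi$, where the induction hypothesis is explicitly quantified over \emph{all} contexts: for every context $C$, it is not the case that both $C \Vdash^{+} \varphi$ and $C \Vdash^{-} \varphi$. Quantifying universally over contexts in the hypothesis is essential, because the implication case will force us to descend to a subcontext of $C$ and invoke the hypothesis there.

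For the base case, $\varphi$ is an $L$-formula $\alpha$. Here I would first record a small auxiliary fact, itself provable by a routine induction on the structure of $L$-formulas: in any single possible world $w$, no $L$-formula is simultaneously true and false (this is just the consistency of the classical truth conditions at the level of worlds). Then, if $C \Vdash^{+} \alpha$ and $C \Vdash^{-} \alpha$, we pick any $w \in C$ (which exists, since contexts are nonempty by definition) and conclude that $\alpha$ is both true and false in $w$, contradicting the auxiliary fact.

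The inductive steps for $\NOT$, $\OR$, and $\AND$ are straightforward unfoldings of the clauses. For negation, $C \Vdash^{+} \NOT\psi$ and $C \Vdash^{-} \NOT\psi$ reduce immediately to $C \Vdash^{-} \psi$ and $C \Vdash^{+} \psi$, contradicting the hypothesis for $\psi$. For disjunction and conjunction, assuming both assertibility and deniability in $C$ and unfolding yields, by a short case analysis, that one of the two immediate subformulas is both assertible and deniable in $C$, again contradicting the hypothesis.

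The main obstacle is the implication case, and this is precisely where clause (C) does the work that clause (B) cannot. Suppose $C \Vdash^{+} \psi \IMPL \chi$ and $C \Vdash^{-} \psi \IMPL \chi$. The deniability clause (C) is existential, so it supplies a witnessing nonempty $D \subseteq C$ with $D \Vdash^{+} \psi$ and $D \Vdash^{-} \chi$. Feeding this same $D$ into the universally quantified assertibility clause, and using $D \Vdash^{+} \psi$, we obtain $D \Vdash^{+} \chi$. Hence $D \Vdash^{+} \chi$ and $D \Vdash^{-} \chi$, so the induction hypothesis applied to the subformula $\chi$ at the context $D$ delivers a contradiction. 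The reason clause (B) would fail here is that its deniability condition is universally quantified, hence vacuously satisfiable whenever no nonempty subcontext asserts $\psi$; clause (C), by contrast, demands an actual witness subcontext, and it is exactly this witness that lets the induction close.
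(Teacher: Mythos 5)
Your proof is correct and is exactly the argument the paper has in mind: the paper offers no details, saying only that clause (C) ``allows us to prove by induction'' the proposition, and your structural induction---with the hypothesis quantified over all contexts so that the implication case can pass to the witnessing subcontext $D$ and apply the hypothesis to the consequent there---is precisely that induction, carried out correctly in every case. Your closing remark about why the universally quantified clause (B) would break the implication step also matches the paper's discussion of why (B) yields formulas that are both assertible and deniable.
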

Another feature that distinguishes our deniability clause for implication from (B) is that it leads to the classical behaviour of intensional connectives in the special case where the context contains only one world. Thus in singleton contexts the assertibility and deniability clauses for intensional connectives coincide with truth and falsity clauses for the extensional connectives. To express this fact more formally, we will use the following notation. For any $L^*$-formula $\varphi$, $\varphi^{e}$ denotes the $L$-formula which is the result of replacing all intensional connectives in $\varphi$ with their extensional counterparts. Then the following fact can be easily established by induction.
\begin{proposition}\label{p3}
$\left\{w\right\} \Vdash^{+} \varphi$ iff $\varphi^{e}$ is true in $w$, and $\left\{w\right\} \Vdash^{-} \varphi$ iff $\varphi^{e}$ is false in $w$.
\end{proposition}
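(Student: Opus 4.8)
The plan is to prove both biconditionals simultaneously by induction on the structure of $\varphi$. The key structural observation driving the whole argument is that a singleton context $\{w\}$ has exactly one nonempty subset, namely $\{w\}$ itself; this collapses the quantification over subcontexts that appears in the clauses for intensional implication and forces the intensional connectives to behave exactly like their extensional counterparts. Throughout I would use that the translation $(\cdot)^{e}$ commutes with the connectives, sending $\NOT, \OR, \AND, \IMPL$ to ${\sim}, \cup, \cap, \supset$ respectively.

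For the base case, $\varphi$ is an $L$-formula $\alpha$, so $\varphi^{e} = \alpha$. The assertibility clause for $L$-formulas gives $\{w\} \Vdash^{+} \alpha$ iff $\alpha$ is true in every world of $\{w\}$, i.e.\ iff $\alpha$ is true in $w$; the deniability case is symmetric, using falsity in place of truth. This is precisely the claim for $\varphi = \alpha$.

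For the inductive step I would treat each intensional connective in turn, assuming the statement for the immediate subformulas. The propositional cases are routine. For $\NOT \varphi$, the clauses swap assertibility and deniability, and by the induction hypothesis this matches the classical falsity/truth clause for ${\sim}$: we have $\{w\} \Vdash^{+} \NOT\varphi$ iff $\{w\} \Vdash^{-} \varphi$ iff $\varphi^{e}$ is false in $w$ iff ${\sim}\varphi^{e}$ is true in $w$, and dually for deniability. For $\OR$ and $\AND$, the singleton-context clauses reduce directly, via the induction hypothesis, to the classical truth and falsity clauses for $\cup$ and $\cap$.

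The only case requiring the structural observation is implication, but I expect no genuine obstacle here; the single point needing care is making the reduction of the subcontext quantifiers explicit. For assertibility, $\{w\} \Vdash^{+} \varphi \IMPL \psi$ holds iff $D \Vdash^{+} \psi$ for every nonempty $D \subseteq \{w\}$ with $D \Vdash^{+} \varphi$; since the sole candidate is $D = \{w\}$, this reduces to the condition that $\{w\} \Vdash^{+} \psi$ whenever $\{w\} \Vdash^{+} \varphi$, which by the induction hypothesis is exactly the condition that $\varphi^{e} \supset \psi^{e}$ is true in $w$. For deniability, $\{w\} \Vdash^{-} \varphi \IMPL \psi$ holds iff $D \Vdash^{-} \psi$ for some nonempty $D \subseteq \{w\}$ with $D \Vdash^{+} \varphi$; again the only candidate is $\{w\}$, so this reduces to $\{w\} \Vdash^{+} \varphi$ together with $\{w\} \Vdash^{-} \psi$, which by the induction hypothesis says $\varphi^{e}$ is true and $\psi^{e}$ false in $w$, i.e.\ that $\varphi^{e} \supset \psi^{e}$ is false in $w$. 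This closes the induction and establishes the proposition.
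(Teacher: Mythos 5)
Your proof is correct and matches the paper's approach: the paper simply states that Proposition~\ref{p3} ``can be easily established by induction,'' and your simultaneous induction on assertibility and deniability, with the key observation that $\{w\}$ has no proper nonempty subcontexts (collapsing the implication clauses to the classical material conditional), is exactly that argument carried out in full.
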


Since the truth conditions for the extensional connectives are the standard ones, the consequence relation restricted to the language $L$ is classical.
\begin{proposition}
For the formulas of the language $L$ the consequence relation coincides with the consequence relation of classical logic.
\end{proposition}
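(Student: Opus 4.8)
The plan is to unfold both consequence relations and verify the two inclusions separately, exploiting the fact that for an $L$-formula $\alpha$ the assertibility clause reduces to: $C \Vdash^{+} \alpha$ iff $\alpha$ is true at every $w \in C$. Writing $\models_{cl}$ for classical consequence, I want to establish that for all $\Delta \cup \{\alpha\} \subseteq L$ we have $\Delta \vDash \alpha$ iff $\Delta \models_{cl} \alpha$. Since possible $A$-worlds are exactly classical valuations of the atoms, and the truth conditions for $\supset, \cap, \cup, \sim$ are stipulated to be the classical ones, the notion of truth at a world occurring in the assertibility clause is literally the classical notion of truth under a valuation. This identification is what drives both directions.

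For the direction from classical to \textsf{LAD} consequence, I would suppose $\Delta \models_{cl} \alpha$ and let $C$ be any context with $C \Vdash^{+} \beta$ for every $\beta \in \Delta$. By the assertibility clause for $L$-formulas, each such $\beta$ is true at every $w \in C$, so by classical entailment $\alpha$ is true at every $w \in C$, which is precisely $C \Vdash^{+} \alpha$. Hence $\Delta \vDash \alpha$. This direction uses only the definitions and requires no special contexts.

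For the converse, I would suppose $\Delta \vDash \alpha$ and take any valuation $w$ verifying all of $\Delta$, then instantiate the \textsf{LAD} consequence at the singleton context $\{w\}$. By Proposition \ref{p3} (or directly from the assertibility clause, noting that $\alpha^{e} = \alpha$ for $L$-formulas) we have $\{w\} \Vdash^{+} \beta$ iff $\beta$ is true at $w$, so $\{w\}$ asserts every premise; applying $\Delta \vDash \alpha$ then yields $\{w\} \Vdash^{+} \alpha$, i.e. $\alpha$ is true at $w$. Thus $\Delta \models_{cl} \alpha$. The only point requiring a little care is that singleton contexts are legitimate (contexts are required merely to be nonempty) and that ranging over all valuations $w$ of the atoms occurring in $\Delta \cup \{\alpha\}$ exhausts the classical valuations. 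Given this, there is no genuine obstacle: the proposition is essentially immediate from the collapse of assertibility to pointwise truth on the fragment $L$, with the singleton contexts supplying exactly the classical counterexamples needed for the completeness direction.
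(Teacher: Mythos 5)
Your proof is correct and matches the paper's (implicit) justification: the paper states this proposition without a detailed proof, on the grounds that assertibility of $L$-formulas collapses to pointwise classical truth, which is exactly the observation driving both of your directions. Your use of singleton contexts for the completeness direction is the natural and intended argument, so there is nothing to add.
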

Of course, in general intensional and extensional connectives behave differently. Let us illustrate this with the murder scenario described above. Let $A$  be the set of atomic formulas $\{p, q, r, s, t \}$. Consider the following formalization and an $A$-context consisting of four possible $A$-worlds that differ from each other on who among the suspects committed the murder. 

\vspace{6pt}

\begin{tabular}{llcccc}
&& $w_1$ & $w_2$ & $w_3$ & $w_4$ \\
\textit{The murderer is tall.} &$p$ & 1 & 1 & 0 & 0 \\
\textit{The murderer is short.} &$q$ & 0 & 0 & 1 & 1 \\
\textit{The murderer has dark hair.} &$r$ & 1 & 0 & 0 & 1\\
\textit{The murderer has blond hair.} &$s$ & 0 & 1 & 1 & 0 \\
\textit{The murderer has a moustache.} &$t$ & 1 & 0 & 1 & 0 \\
\end{tabular}

\vspace{6pt}

Now one can check that in this context all the premises of the argument, namely the formulas $p \cup q$, $p \IMPL (r \IMPL t)$, and $q \IMPL (s \IMPL t)$, are assertible but the conclusion $(r \IMPL t) \OR (s \IMPL t)$  is not.

Our deniability condition for conditionals is an important ingredient of the semantics because it increases the expressive power of the language and introduces a specific kind of non-persistent formulas. We say that a formula is \textit{persistent} if it holds that whenever it is assertible in a context, it is assertible in every subcontext (i.e. in every nonempty subset) of the context. It can be proved by induction that every $L^*$-formula that does not contain any intensional implication in the scope of an intensional negation is persistent.

This claim does not hold generally. For example, for the above context $C=\{ w_1, w_2, w_3, w_4 \}$ we have $C \Vdash^+ \NOT (p \IMPL q)$ because there is a subcontext $D=\{w_1, w_2 \}$ such that $D \Vdash^+ p$ and $D \Vdash^- q$. However, there is a subcontext $E \subseteq C$, e.g. $E = \{w_3, w_4 \}$, such that $E  \nVdash^+ \NOT (p \IMPL q)$.

Note that every formula of the form $\varphi \IMPL \psi$  is also persistent, even if it contains an occurrence of intensional implication in the scope of intensional negation. We would like to say that persistent formulas play  in inferences a substantially different role than non-persistent formulas.  However, the notion of a persistent formula was defined \textit{semantically} and thus we cannot use it directly in the formulation of a syntactic deductive system. Nevertheless, we just specified \textit{syntactically} a large class of formulas that have this semantic property. We will call them safe.  We say that a formula is \textit{safe}, if either it does not contain any $\IMPL$ in the scope of $\NOT$, or it is of the form $\varphi \IMPL \psi$. 
\begin{proposition}\label{l: persistence of safe formulas}
Every safe formula is persistent.
\end{proposition}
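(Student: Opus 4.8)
The plan is to prove the statement by a structural induction that simultaneously controls both assertibility-persistence and deniability-persistence, since the clause for intensional negation forces these two notions to trade places. Throughout, I will call $\varphi$ \emph{assertibility-persistent} if $C \Vdash^{+} \varphi$ and $\emptyset \neq D \subseteq C$ imply $D \Vdash^{+} \varphi$, and \emph{deniability-persistent} if the same holds with $\Vdash^{-}$ in place of $\Vdash^{+}$; the proposition asks for assertibility-persistence of every safe formula.

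First I would record two facts that require no induction. (1) Every $L$-formula is both assertibility- and deniability-persistent: if every world of $C$ makes $\alpha$ true (resp.\ false), the same holds in any nonempty $D \subseteq C$. (2) Every formula of the form $\chi \IMPL \theta$ is assertibility-persistent, for arbitrary $\chi,\theta$: unfolding the assertibility clause, if $C \Vdash^{+} \chi \IMPL \theta$ and $\emptyset \neq D \subseteq C$, then any nonempty $E \subseteq D$ with $E \Vdash^{+}\chi$ is also a nonempty subset of $C$, so $E \Vdash^{+}\theta$, giving $D \Vdash^{+} \chi \IMPL \theta$. Fact (2) already disposes of the second disjunct in the definition of safe.

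The core is then a simultaneous induction proving, for every $L^*$-formula $\varphi$: (a) if no $\IMPL$ occurs in the scope of any $\NOT$ in $\varphi$, then $\varphi$ is assertibility-persistent; and (b) if $\varphi$ contains no occurrence of $\IMPL$ at all, then $\varphi$ is deniability-persistent. The base case ($\varphi \in L$) is covered by fact (1). For $\OR$ and $\AND$ both parts follow from the inductive hypotheses applied to the immediate subformulas, using the respective assertibility and deniability clauses, which are conjunctive or disjunctive in exactly the way needed. For $\IMPL$, part (a) is handled by fact (2), and part (b) is vacuous. The crucial case is $\NOT$: for part (a), if $\NOT\chi$ has no $\IMPL$ in the scope of a $\NOT$, then $\chi$, lying entirely inside this outermost negation, contains no $\IMPL$ at all, so the inductive hypothesis (b) yields deniability-persistence of $\chi$, and the clause $C \Vdash^{+}\NOT\chi \Leftrightarrow C \Vdash^{-}\chi$ converts this into assertibility-persistence of $\NOT\chi$; for part (b), $\NOT\chi$ being free of $\IMPL$ makes $\chi$ free of $\IMPL$ as well, so hypothesis (a) gives assertibility-persistence of $\chi$, which the clause $C \Vdash^{-}\NOT\chi \Leftrightarrow C \Vdash^{+}\chi$ turns into deniability-persistence of $\NOT\chi$. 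The induction is well founded because in the negation case each polarity is reduced to the opposite polarity on a strictly smaller formula, while all other cases reduce to subformulas of the same polarity.

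Finally I would assemble the proposition: a safe formula either satisfies the hypothesis of (a), and is therefore assertibility-persistent, i.e.\ persistent, or is of the form $\chi \IMPL \theta$ and is persistent by fact (2). I expect the main obstacle to be identifying the correct strengthening of the induction hypothesis: plain assertibility-persistence does not survive the negation clause, and deniability-persistence genuinely fails for implications, since the deniability clause for $\IMPL$ is existential over subcontexts and hence non-persistent. One must therefore thread the needle by demanding deniability-persistence only of formulas free of $\IMPL$, and by exploiting that the neg-safety condition guarantees precisely that the argument of any negation is free of $\IMPL$.
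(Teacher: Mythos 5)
Your proof is correct and takes essentially the same route as the paper, which obtains the proposition by combining (i) an induction showing that every formula with no $\IMPL$ in the scope of a $\NOT$ is persistent and (ii) the direct observation that every formula of the form $\varphi \IMPL \psi$ is persistent. The paper only sketches the induction, and your simultaneous strengthening---assertibility-persistence for formulas with no $\IMPL$ under a $\NOT$, deniability-persistence for $\IMPL$-free formulas---is exactly the induction hypothesis needed to get through the negation clause, where the two polarities swap.
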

Safe formulas will play an important role in the formulation of our system of natural deduction. The subsequent completeness proof will indicate that this syntactic notion sufficiently approximates the notion of a persistent formula.

Now we will show that by the extra expressive power gained from the deniability clause for implication we obtain functional completeness. To describe this fact more precisely, we will use the following defined symbols that will also be  used later in the system of natural deduction. 

\vspace{6pt}

\begin{tabular}{ll}
(a) & $\bot =_{def} p \cap {\sim} p$, for some selected fixed atomic formula $p$,\\
(b) & $\POSS \varphi =_{def} \NOT (\varphi\IMPL \bot)$,\\
(c) & $\alpha_1 \oplus \ldots \oplus  \alpha_n =_{def} (\alpha_1 \cup \ldots \cup \alpha_n) \AND (\POSS \alpha_1 \AND \ldots \AND \POSS \alpha_n)$.
\end{tabular}

\vspace{6pt}

The symbol $\bot$ represents a \textit{contradiction}. It can be easily observed that $C \Vdash^- \bot$, for every context $C$. The symbol $\POSS$ expresses a contextual possibility. It holds that $C \Vdash^+ \POSS \varphi$ iff there is a nonempty $D \subseteq C$ such that $D \Vdash^+ \varphi$. Finally, $\oplus$ represents a ``pragmatic disjunction'', i.e. an extensional disjunction but such that all its disjuncts are possible (which is usually regarded to be a pragmatic feature of disjunction).

If a finite set of atomic formulas is fixed, any $A$-world $w$ can be described by an $L$-formula $\sigma_w$ in the usual way. For example, in the above example the formula $\sigma_{w_1}$ would be $p \cap {\sim}q \cap r \cap {\sim}s \cap t$. Now any $A$-context $C=\{w_1, \ldots, w_n \}$ can be described by the formula $\mu_C=\sigma_{w_1} \oplus \ldots \oplus \sigma_{w_n}$. Moreover, any set of $A$-contexts $X=\{ C_1, \ldots, C_k \}$ can be described by the formula $\xi_X=\mu_{C_1} \OR \ldots \OR \mu_{C_k}$. Now we can state the functional completeness result.

\begin{proposition}
Let $A$  be a finite set of atomic formulas. Then it holds for any $A$-contexts $C, D$ that $C \Vdash^+ \mu_D$ iff $D=C$. Moreover, it holds for any set of $A$-contexts $X$, and any $A$-context $C$ that $C \Vdash^+ \xi_X$ iff $C \in X$.
\end{proposition}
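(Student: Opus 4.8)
\emph{Proof proposal.} The plan is to unfold the abbreviations and reduce everything to the assertibility clauses for the extensional disjunction $\cup$, the intensional conjunction $\AND$, the contextual possibility $\POSS$, and the intensional disjunction $\OR$. The one fact I would establish first is the \emph{characteristic property} of the world descriptions: for any $A$-worlds $v,w$, the $L$-formula $\sigma_w$ is true in $v$ if and only if $v=w$. This is immediate from the construction of $\sigma_w$ as the conjunction of one literal per atom of the finite set $A$ (positive where $w$ assigns $1$, negative where it assigns $0$): a world satisfies $\sigma_w$ exactly when it agrees with $w$ on every atom of $A$, and since $A$-worlds are just truth-value assignments on $A$ (no two distinct worlds making the same atoms true), agreement on all atoms means identity.

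For the first claim, I would write $D=\{v_1,\ldots,v_m\}$ and unfold $\mu_D=(\sigma_{v_1}\cup\ldots\cup\sigma_{v_m}) \AND (\POSS\sigma_{v_1}\AND\ldots\AND\POSS\sigma_{v_m})$. By the assertibility clause for $\AND$, $C\Vdash^+\mu_D$ holds iff $C$ asserts both conjuncts. The first conjunct is an $L$-formula, so $C\Vdash^+\sigma_{v_1}\cup\ldots\cup\sigma_{v_m}$ iff this disjunction is true in every $w\in C$; by the characteristic property it is true in $w$ exactly when $w\in D$, so the first conjunct is assertible in $C$ iff $C\subseteq D$. For the second conjunct the clause for $\AND$ reduces it to $C\Vdash^+\POSS\sigma_{v_i}$ for each $i$, which by the stated behaviour of $\POSS$ means that some nonempty $E\subseteq C$ asserts $\sigma_{v_i}$. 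Thus the second conjunct is assertible in $C$ iff $D\subseteq C$. Combining the two, $C\Vdash^+\mu_D$ iff $C\subseteq D$ and $D\subseteq C$, i.e. iff $C=D$.

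For the second claim, I would apply the assertibility clause for $\OR$ iteratively (it extends straightforwardly to finite disjunctions): $C\Vdash^+\mu_{C_1}\OR\ldots\OR\mu_{C_k}$ holds iff $C\Vdash^+\mu_{C_i}$ for some $i$. By the first claim each disjunct $\mu_{C_i}$ is asserted in $C$ precisely when $C=C_i$, so $C\Vdash^+\xi_X$ iff $C=C_i$ for some $i$, that is, iff $C\in X$.

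There is no genuine obstacle once the characteristic property of $\sigma_w$ is in place; the only point demanding a little care is the analysis of $\POSS\sigma_{v_i}$, where one must observe that the witnessing subcontext $E$ cannot be a proper enlargement of $\{v_i\}$: asserting the $L$-formula $\sigma_{v_i}$ throughout $E$ forces, via the characteristic property, $E=\{v_i\}$. Hence assertibility of $\POSS\sigma_{v_i}$ in $C$ is equivalent to the plain membership $v_i\in C$ rather than to some weaker condition, which is exactly what makes the second conjunct of $\mu_D$ encode $D\subseteq C$.
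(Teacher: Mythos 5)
Your proof is correct. Note that the paper itself states this proposition without proof (it only remarks that a similar result was observed in the cited work on inquisitive logic with weak negation), so there is no official argument to compare against; your verification is exactly the routine one the authors omit: the characteristic property of the world-descriptions $\sigma_w$, the reduction of the first conjunct of $\mu_D$ to $C \subseteq D$ and of the second conjunct to $D \subseteq C$ via the fact (stated in the paper) that $C \Vdash^{+} \POSS \varphi$ iff some nonempty subcontext of $C$ asserts $\varphi$, and the iterated clause for $\OR$. Your closing observation --- that any nonempty $E \subseteq C$ with $E \Vdash^{+} \sigma_{v_i}$ must equal $\{v_i\}$, so $\POSS \sigma_{v_i}$ encodes plain membership $v_i \in C$ --- is indeed the one point requiring care, and you handle it correctly.
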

A similar result was observed in \cite{puncochar15} for inquisitive logic with an operation called ``weak negation''. In fact, the crucial part of the proof of our main result will be a syntactic reconstruction of weak negation within our system of natural deduction.

\section{Deductive calculus} 

In this section we define a Fitch style system of natural deduction for the semantics of assertibility and deniability. We will use the following economic notation. We use brackets and the colon to refer to subproofs.  For example, $[\varphi: \psi]$ stands for a subproof in which $\varphi$ is the hypothetical assumption and $\psi$ is the last line of the subproof. We use two kinds of brackets, square and round.  The difference is in the formulas from the outer proof that can be used in the subproof. By using square brackets in $[\varphi: \psi]$ we indicate that all formulas that are available in the step before the hypothetical assumption $\varphi$ is made are also available under this assumption in the derivation of $\psi$ from $\varphi$. This is just as in the standard natural deduction systems for classical and intuitionistic logic. The round brackets indicate that there is a restriction concerning the formulas from the outer proof that can be used under the assumption. By writing $(\varphi: \psi)$, we indicate that only \textit{safe} formulas from the outer proof can be used in the derivation of $\psi$ from $\varphi$. (Safe formulas are defined in the previous section.) 

The distinction between square and round brackets reflects the semantic distinction between two kinds of hypothetical assumption. Square brackets, i.e. $[\varphi: \psi]$, indicate that by making the hypothetical assumption $\varphi$ in a context $C$ we do not change the given context, we only assume that the assumption $\varphi$ is assertible in $C$. In contrast, round brackets, i.e. $(\varphi: \psi)$,  indicate  that by making the hypothetical assumption $\varphi$ in a context $C$ we move from $C$ to an arbitrary subcontext $D$ of $C$ in which $\varphi$ is assertible. If we have already proved that a formula is assertible in $C$, we can use it under the assumption $\varphi$ (in $D$) only if it is guaranteed that it is persistent. Proposition \ref{l: persistence of safe formulas} guarantees that all safe formulas are persistent.

We split the rules of the calculus into three groups. The first group contains the ``classical'' introduction and elimination rules for the extensional connectives.

\vspace{6pt}

\begin{tabular}{llll}
(i$\cap$) & $\alpha, \beta / \alpha \cap \beta$ & (e$\cap_1$) & $\alpha \cap \beta / \alpha$ \\
&& (e$\cap_2$) & $\alpha \cap \beta / \beta$ \\[0,3cm]
(i$\cup_1$) & $\alpha / \alpha \cup \beta$ & (e$\cup$) & $\alpha \cup \beta, (\alpha:\gamma), (\beta:\gamma) / \gamma$\\
(i$\cup_2$) & $\beta / \alpha \cup \beta$ && \\[0,3cm]
(i$\supset$) & $(\alpha:\beta)/\alpha \supset \beta$ & (e$\supset$) & $\alpha, \alpha \supset \beta / \beta$\\[0,3cm]
(i$\sim$) & $(\alpha:\bot)/{\sim}\alpha$ & (e$\sim_1$) & $\alpha, {\sim}\alpha/\bot$\\
&& (e$\sim_2$) & ${\sim}{\sim}\alpha/\alpha$\\
\end{tabular}

\vspace{6pt}

\noindent The second group contains the following ``intuitionistic''  rules concerning the intensional connectives (but notice the restrictions indicated by round brackets and the fact that (i$\NOT$) is restricted to $L$-formulas):

\vspace{6pt}

\begin{tabular}{llll}
(i$\AND$) & $\varphi,\psi/\varphi \AND \psi$ & (e$\AND_1$) & $\varphi \AND \psi/ \varphi$ \\
&& (e$\AND_2$) & $\varphi \AND \psi/ \psi$ \\[0,3cm]
(i$\OR_1$) & $\varphi/ \varphi \OR \psi$ & (e$\OR$) & $\varphi \OR \psi, [\varphi:\chi], [\psi:\chi] / \chi$\\
(i$\OR_2$) & $\psi/ \varphi \OR \psi$ && \\[0,3cm]
(i$\IMPL$) & $(\varphi:\psi)/\varphi \IMPL \psi$ & (e$\IMPL$) & $\varphi, \varphi \IMPL \psi/ \psi$\\[0,3cm]
(i$\NOT$) & $(\alpha:\bot)/\NOT \alpha$ & (e$\NOT$) & $\varphi, \NOT \varphi / \bot$\\
&& (EFQ) & $\bot /\varphi$
\end{tabular}

\vspace{6pt}

The third group consists of the rules that characterize the interaction of intensional negation with all intensional operators, plus two extra rules, (CEM), i.e. ``contextual excluded middle'', and ($\POSS\oplus$):

\vspace{6pt}

\begin{tabular}{llll}
($\NOT\NOT_1$) & $\NOT \NOT \varphi / \varphi$ & ($\NOT\NOT_2$) & $\varphi/ \NOT \NOT \varphi$ \\[0,1cm]
($\NOT\AND_1$) & $\NOT (\varphi \AND \psi) / \NOT \varphi \OR \NOT \psi$ & ($\NOT\AND_2$) & $\NOT \varphi \OR \NOT \psi / \NOT (\varphi \AND \psi)$ \\[0,1cm]
($\NOT\OR_1$) & $\NOT (\varphi \OR \psi) / \NOT \varphi \AND \NOT \psi$ & ($\NOT\OR_2$) & $\NOT \varphi \AND \NOT \psi / \NOT (\varphi \OR \psi)$ \\[0,1cm]
($\NOT{\IMPL}_1$) & $\NOT (\varphi \IMPL \psi) / \POSS (\varphi \AND \NOT \psi)$ & ($\NOT{\IMPL}_2$) & $\POSS (\varphi \AND \NOT \psi)/ \NOT (\varphi \IMPL \psi)$ \\[0,1cm]
(CEM) & $/ (\varphi \IMPL \bot) \OR \POSS \varphi$ & ($\POSS\oplus$) & $\POSS \alpha_{1} \AND \ldots \AND \POSS \alpha_{n}/\POSS (\alpha_{1} \oplus \ldots \oplus \alpha_{n})$ \\[0,1cm]
\end{tabular}

\vspace{6pt}

We will write $\varphi_1, \ldots, \varphi_n \vdash \psi$ if $\psi$ is derivable in this system from the assumptions $\varphi_1, \ldots, \varphi_n$.

\vspace{6pt}

As we already mentioned, our semantics has some connection to inquisitive semantics. Note, however, that our deductive system is very different from the standard system of natural deduction for inquisitive logic (see, e.g., \cite{ciardelli16}), though it has some similarities with the system for inquisitive logic with weak negation developed in \cite{puncochar15}. 

We can illustrate the role of restrictions related to round brackets with the following example. If the restriction given by the round brackets were not present we could derive, for example, the contradiction $\bot$ from the premise $\POSS p \AND \POSS {\sim} p$ in the following way:

{\footnotesize \begin{itemize}
\item[]
\begin{fitch}
					\POSS p \AND \POSS {\sim} p = \NOT (p \IMPL \bot) \AND \NOT ({\sim} p \IMPL \bot)							& premise \\ 
					p \cup {\sim} p	 						& the standard derivation of excluded middle \\			
\fh			 		p											& hypothetical assumption \\			
\fa \fh		 	{\sim} p									& hypothetical assumption \\	
\fa	 \fa			\bot										& 3,4, (e${\sim}_1$) \\	
\fa					{\sim} p \IMPL \bot					& 4-5, (i$\IMPL$) \\	
\fa					\NOT ({\sim} p \IMPL \bot)     	& 1, (e$\AND_2$)  \\			
\fa 				\bot		                 					& 6,7, (e$\NOT$) \\			
\fh			  		{\sim} p						             & hypothetical assumption \\	    
\fa \fh			p                      						& hypothetical assumption \\
\fa \fa			\bot	 									& 9,10, (e${\sim}_1$)  \\	
\fa 				p \IMPL \bot							& 10-11, (i$\IMPL$)  \\	
\fa					\NOT (p \IMPL \bot) 				& 1, (e$\AND_1$)  \\	
\fa 				\bot										& 12,13, (e$\NOT$)  \\	
					\bot				& 2,3-8,9-14, (e$\cup$) \\	
\end{fitch}
\end{itemize}} 

This would be an undesirable result. In our semantics, $\POSS p \AND \POSS {\sim} p$ does not entail $\bot$ which accords with the intuition that $p$ and ${\sim} p$  can consistently be both possible.  As it stands, the derivation is incorrect in the system, because in the steps 7 and 13 the restriction on the rule  (e$\cup$) was not respected and the unsafe formula $\POSS p \AND \POSS {\sim} p$ occurring in the outer proof was used under the hypothetical assumptions.

We can observe that all rules of the system are sound with respect to the semantics. To illustrate how the soundness proof goes let us consider the rules (e$\cup$) and (e$\OR$). Let $\Delta^s$ denote the set of safe formulas from $\Delta$. Soundness of the two rules corresponds respectively to the following two semantic facts:
\begin{itemize}
\item[(a)] If $\Delta^s, \alpha \vDash \gamma$ and $\Delta^s, \beta \vDash \gamma$ then $\Delta, \alpha \cup \beta \vDash \gamma$.
\item[(b)] If $\Delta, \varphi \vDash \chi$ and $\Delta, \psi \vDash \chi$ then $\Delta, \varphi \OR \psi \vDash \chi$.
\end{itemize}
In order to prove (a), assume that $\Delta^s, \alpha \vDash \gamma$ and $\Delta^s, \beta \vDash \gamma$. Let $C$ be a context in which all formulas from $\Delta$ and the formula $\alpha \cup \beta$  are assertible. Take an arbitrary possible world $w \in C$. All formulas of $\Delta^s$ are assertible in $\{w\}$ (due to persistence of safe formulas). Moreover, $\alpha$ or $\beta$ is assertible in $\{w\}$. It follows from our assumption that $\gamma$ is assertible in $\{w\}$. Since this holds for every $w \in C$, and $\gamma$ is an $L$-formula, we obtain $C \Vdash^+ \gamma$, as required. 

In order to prove (b), assume that $\Delta, \varphi \vDash \chi$ and $\Delta, \psi \vDash \chi$. Let $C$ be a context in which all formulas from $\Delta$ and the formula $\varphi \OR \psi$  are assertible. Then $\varphi$ or $\psi$ is assertible in $C$. It follows from our assumption that $\chi$ is assertible in $C$, as required.

\section{Completeness}



The proof of completeness proceeds in the following steps. First a contextual weak negation $-\varphi$  is defined recursively. This negation is a denial of assertibility, that is, $-\varphi$ is assertible in a given context iff $\varphi$  is not assertible in that context. It has to be shown that this negation behaves properly also on the syntactic side. That means that the following holds: $\Delta \nvdash \varphi$ if and only if $\Delta \cup \{-\varphi\}$ is consistent, i.e. $\Delta, -\varphi \nvdash \bot$. The proof of this fact will be the main task of this section. The reconstruction of weak negation allows us to reduce completeness to the claim that every consistent set has a model. We will only sketch the proof of this claim because the technique is basically the same as the one used in the completeness proof for inquisitive logic with weak negation from \cite{puncochar15}. The ``contextual weak negation'' is defined by the following five recursive clauses. The first clause states the definition for $L$-formulas and their intensional negations:

\vspace{6pt}

\begin{tabular}{lll}
1. & $-\alpha=\POSS \NOT \alpha$,  & $-\NOT \alpha = \POSS \alpha$.\\
\end{tabular}

\vspace{6pt}

In the clauses 2.-5. we assume that $\varphi, \psi$ are arbitrary $L^*$-formulas for which $-\varphi, -\NOT \varphi, -\psi, -\NOT\psi$ are already defined, and we further define:

\vspace{6pt}

\begin{tabular}{lll}
2. & $-\NOT \NOT \varphi = - \varphi$.\\
3. & (a) $-(\varphi \IMPL \psi)=\POSS (\varphi \AND -\psi)$, & (b) $-\NOT(\varphi \IMPL \psi)=\varphi \IMPL -\NOT \psi$.\\
4. & (a) $-(\varphi \AND \psi)=-\varphi \OR -\psi$, &  (b) $-\NOT (\varphi \AND \psi)=-\NOT \varphi \AND -\NOT \psi$. \\
5. & (a) $-(\varphi \OR \psi)=-\varphi \AND -\psi$, & (b) $-\NOT (\varphi \OR \psi)=-\NOT \varphi \OR -\NOT \psi$.
\end{tabular}

\vspace{6pt}

Note that by these clauses $-\varphi$ is indeed defined for every $L^*$-formula $\varphi$. By induction on $\varphi$, we obtain the following observation.

\begin{lemma}\label{l: ass of minus is the lack of ass}
For any $L^*$-formula $\varphi$ and any context $C$, it holds that
\begin{itemize}
\item[] $C \Vdash^{+} -\varphi$ iff  $C \nVdash^{+} \varphi$.
\end{itemize}
\end{lemma}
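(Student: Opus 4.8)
The plan is to prove the biconditional by strong induction on the number of connective occurrences in $\varphi$ (counting both extensional and intensional connectives), following exactly the case split induced by the five recursive clauses defining $-$. Every $L^*$-formula falls under precisely one clause according to its outermost shape: it is an $L$-formula (clause 1, left), a negated $L$-formula $\NOT\alpha$ (clause 1, right), a double negation $\NOT\NOT\chi$ (clause 2), an implication or a negated implication (clause 3), a conjunction or a negated conjunction (clause 4), or a disjunction or a negated disjunction (clause 5). The only semantic facts I need are the assertibility and deniability clauses listed earlier, the interdefinability $C \Vdash^{+} \NOT \psi$ iff $C \Vdash^{-} \psi$ (and dually), and the derived characterization $C \Vdash^{+} \POSS \psi$ iff $D \Vdash^{+} \psi$ for some nonempty $D \subseteq C$.

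For the base case (clause 1, left), unfolding $\POSS\NOT\alpha$ gives $C \Vdash^{+} \POSS\NOT\alpha$ iff some nonempty $D \subseteq C$ satisfies $D \Vdash^{-} \alpha$, i.e. $\alpha$ is false throughout some nonempty $D \subseteq C$; since contexts are nonempty and $\alpha$ is an $L$-formula, this is equivalent to $\alpha$ being false in some world of $C$, which is exactly $C \nVdash^{+}\alpha$. The symmetric computation handles $-\NOT\alpha = \POSS\alpha$. The purely propositional clauses 2, 4, and 5 then reduce to elementary Boolean manipulations: one unfolds the assertibility clause (or, for the negated variants, the deniability clause) of the outer connective, rewrites each $C \Vdash^{+} \NOT(\cdot)$ as $C \Vdash^{-}(\cdot)$ where needed, and applies the induction hypothesis to the immediate subformulas (for the negated variants, to $\NOT\chi$ and $\NOT\vartheta$, which are strictly simpler). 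For instance, in clause 4(b), $C \Vdash^{+} \NOT(\chi\AND\vartheta)$ iff $C \Vdash^{-}\chi$ or $C \Vdash^{-}\vartheta$ iff $C \Vdash^{+}\NOT\chi$ or $C \Vdash^{+}\NOT\vartheta$, whose negation matches $C \Vdash^{+}(-\NOT\chi \AND -\NOT\vartheta)$ by two applications of the hypothesis.

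The implication cases carry the real content. For clause 3(a), I unfold $\POSS(\chi\AND -\vartheta)$ to ``some nonempty $D \subseteq C$ has $D \Vdash^{+}\chi$ and $D \Vdash^{+}-\vartheta$'', replace $D \Vdash^{+}-\vartheta$ by $D \nVdash^{+}\vartheta$ via the hypothesis, and observe that the result is precisely the negation of the assertibility clause for $\chi\IMPL\vartheta$. The step I expect to be the main obstacle is clause 3(b): here $-$ is applied to $\NOT(\chi\IMPL\vartheta)$ but is defined in terms of $-\NOT\vartheta$, and $\NOT\vartheta$ is \emph{not} a subformula of $\NOT(\chi\IMPL\vartheta)$ — this is exactly why a naive structural induction fails, and why I measure complexity by connective count, under which $\NOT\vartheta$ is strictly smaller. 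Granting the hypothesis for $\NOT\vartheta$, I compute: $C \nVdash^{+}\NOT(\chi\IMPL\vartheta)$ iff $C \nVdash^{-}(\chi\IMPL\vartheta)$ iff every nonempty $D \subseteq C$ with $D \Vdash^{+}\chi$ satisfies $D \nVdash^{-}\vartheta$; rewriting $D \nVdash^{-}\vartheta$ as $D \nVdash^{+}\NOT\vartheta$ and then, by the hypothesis, as $D \Vdash^{+}-\NOT\vartheta$, turns this into the assertibility clause for $\chi\IMPL -\NOT\vartheta$, which is $-\NOT(\chi\IMPL\vartheta)$. Verifying that the nine cases are exhaustive and mutually exclusive, and that in each case the formula to which $-$ is reapplied has strictly fewer connectives than $\varphi$, closes the induction.
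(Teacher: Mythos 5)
Your proof is correct and takes essentially the same route as the paper, which dispatches this lemma with the single remark that it is obtained ``by induction on $\varphi$'' and leaves all details implicit. Your choice of connective count as the induction measure is exactly the right device for making clause 3(b) legitimate (since $\NOT\vartheta$ is not a subformula of $\NOT(\chi\IMPL\vartheta)$), and it plays the same role as the simultaneous induction on $\varphi$ and $\NOT\varphi$ that the paper uses explicitly in its proof of Lemma \ref{l: excluded middle}.
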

We say that a context $C$ is a model of a set of formulas $\Delta$ if and only if every formula from $\Delta$ is assertible in $C$. The following lemma follows directly from Lemma \ref{l: ass of minus is the lack of ass}.
\begin{lemma}\label{l: entailment and models}
$\Delta \nvDash \varphi$ iff $\Delta \cup \{-\varphi\}$ has a model.
\end{lemma}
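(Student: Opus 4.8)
The plan is to unfold the two definitions involved and to bridge them with Lemma~\ref{l: ass of minus is the lack of ass}. First I would spell out the left-hand side. Since the consequence relation is defined as assertibility preservation, $\Delta \nvDash \varphi$ holds precisely when it fails that $\varphi$ is assertible in every model of $\Delta$; that is, when there exists a context $C$ such that $C \Vdash^{+} \delta$ for every $\delta \in \Delta$ while $C \nVdash^{+} \varphi$. Next I would spell out the right-hand side directly from the definition of a model: $\Delta \cup \{-\varphi\}$ has a model exactly when there is a context $C$ with $C \Vdash^{+} \delta$ for every $\delta \in \Delta$ and, in addition, $C \Vdash^{+} -\varphi$.

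The key step is then to observe that, for a fixed context $C$, the condition $C \nVdash^{+} \varphi$ and the condition $C \Vdash^{+} -\varphi$ are interchangeable, which is exactly the content of Lemma~\ref{l: ass of minus is the lack of ass}. Substituting one for the other underneath the existential quantifier over contexts converts the unfolded form of $\Delta \nvDash \varphi$ into the unfolded form of ``$\Delta \cup \{-\varphi\}$ has a model'', and the reverse substitution converts it back. Since the conjuncts asserting that $C$ is a model of $\Delta$ are literally identical on both sides and remain untouched, the two existential statements are equivalent, and this establishes the biconditional.

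I expect no genuine obstacle at this stage: all the inductive work has already been discharged in Lemma~\ref{l: ass of minus is the lack of ass}, so what remains is a purely definitional manipulation. The only point that calls for minor care is to keep the witnessing context $C$ fixed across the two sides of the equivalence, so that the lemma is applied to one and the same $C$; this is unproblematic, since only the clause concerning $\varphi$ is rewritten while the clause concerning $\Delta$ is carried along verbatim.
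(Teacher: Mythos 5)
Your proof is correct and matches the paper's approach exactly: the paper likewise treats this lemma as an immediate consequence of Lemma~\ref{l: ass of minus is the lack of ass}, obtained by unfolding the definitions of $\vDash$ and of a model and swapping $C \nVdash^{+} \varphi$ for $C \Vdash^{+} -\varphi$ at a fixed witnessing context. Nothing is missing; the paper simply states it ``follows directly'' without spelling out the definitional bookkeeping you made explicit.
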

The next lemma is a cornerstone of the completeness proof and the main technical issue of this paper.

\begin{lemma}\label{l: excluded middle}
For any formula $\varphi$, the following holds:
\begin{itemize}
\item[(a)] $\vdash \varphi \OR - \varphi$,
\item[(b)] $\varphi, -\varphi \vdash \bot$.
\end{itemize}
\end{lemma}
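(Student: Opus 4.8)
The plan is to prove (a) and (b) simultaneously by induction on the complexity (number of connective occurrences) of $\varphi$, following the recursive definition of $-\varphi$ clause by clause. The induction hypothesis will provide both claims for every strictly simpler formula, and this is exactly what each clause needs: $-\varphi$ is always assembled from weak negations of strictly simpler formulas (e.g. $-\psi$ in clause 3a, and $-\NOT\psi$ in clause 3b, where $\NOT\psi$ has complexity $1+\mathrm{comp}(\psi)$, strictly below the complexity $2+\mathrm{comp}(\varphi)+\mathrm{comp}(\psi)$ of $\NOT(\varphi\IMPL\psi)$). Throughout I will repeatedly use the syntactic fact that every $L$-formula and every formula of the form $\chi\IMPL\vartheta$ is safe, since these are the formulas I must be allowed to carry into round-bracket subproofs.

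For the base cases $\varphi=\alpha$ (so $-\alpha=\POSS\NOT\alpha$) and $\varphi=\NOT\alpha$ (so $-\NOT\alpha=\POSS\alpha$) with $\alpha$ an $L$-formula, I would start from (CEM). For (a) with $\varphi=\alpha$, I instantiate (CEM) at $\NOT\alpha$ to get $\vdash(\NOT\alpha\IMPL\bot)\OR\POSS\NOT\alpha$ and run (e$\OR$) towards the goal $\alpha\OR\POSS\NOT\alpha$; the right disjunct is immediate, and for the left I assume $\NOT\alpha\IMPL\bot$, derive classical excluded middle $\alpha\cup{\sim}\alpha$ for the $L$-formula $\alpha$, and case-split with (e$\cup$), where in the ${\sim}\alpha$ case I obtain $\NOT\alpha$ by (i$\NOT$) (legitimate since ${\sim}\alpha$ is safe), fire (e$\IMPL$) against the safe formula $\NOT\alpha\IMPL\bot$ to reach $\bot$, and close by (EFQ). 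The instantiation of (CEM) at $\alpha$ handles $\varphi=\NOT\alpha$ symmetrically. Part (b) here is routine: from $\alpha$ (resp. $\NOT\alpha$) I build $\NOT\alpha\IMPL\bot$ (resp. $\alpha\IMPL\bot$) by a single (i$\IMPL$) whose subproof uses the safe $L$-formula assumption and (e$\NOT$), then contradict $\POSS\NOT\alpha=\NOT(\NOT\alpha\IMPL\bot)$ (resp. $\POSS\alpha$) with (e$\NOT$).

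For the inductive step I would split the propositional clauses from the implication clauses. The clauses for $\AND$, $\OR$, $\NOT\NOT$, $\NOT\AND$ and $\NOT\OR$ need no (CEM): for (a) I case-split with (e$\OR$) — which uses square brackets and hence imposes no safety restriction — on the hypotheses $\vdash\chi\OR-\chi$ for the relevant subformulas $\chi$, recombining the branches with the matching introduction rule and the appropriate group-three interaction rule; for instance, in clause 4b I move between $\NOT\varphi\OR\NOT\psi$ and $\NOT(\varphi\AND\psi)$ via $(\NOT\AND_1)/(\NOT\AND_2)$, and in clause 2 between $\varphi$ and $\NOT\NOT\varphi$ via $(\NOT\NOT_1)/(\NOT\NOT_2)$. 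For (b) in these clauses I dissolve the weak negation with the same De Morgan (or double-negation) rule, extract the components with the elimination rules, and apply the inductive (b) to a conflicting pair.

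The real work, and the step I expect to be the main obstacle, is clauses 3a and 3b, together with the bookkeeping forced by the round-bracket restriction on (i$\IMPL$). For (a) in clause 3a I would instantiate (CEM) at $\varphi\AND-\psi$, giving $\vdash((\varphi\AND-\psi)\IMPL\bot)\OR\POSS(\varphi\AND-\psi)$, whose right disjunct is already $-(\varphi\IMPL\psi)$; to reduce the left disjunct to $\varphi\IMPL\psi$ I open the conditional subproof $(\varphi:\psi)$, where the outer formula $(\varphi\AND-\psi)\IMPL\bot$ is still usable precisely because it is safe, derive $\psi\OR-\psi$ from the hypothesis for $\psi$, and in the $-\psi$ branch form $\varphi\AND-\psi$, hit the safe conditional with (e$\IMPL$) to get $\bot$, and conclude $\psi$ by (EFQ). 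Clause 3b is analogous, using (CEM) at $\varphi\AND\NOT\psi$, the equivalence $\NOT(\varphi\IMPL\psi)\rightleftharpoons\POSS(\varphi\AND\NOT\psi)$ given by $(\NOT{\IMPL}_1)/(\NOT{\IMPL}_2)$, and the hypothesis for the strictly simpler $\NOT\psi$. For (b) in both implication clauses I invert the construction: from the (safe) conditional $\varphi\IMPL\psi$ (resp. $\varphi\IMPL-\NOT\psi$) I build $(\varphi\AND-\psi)\IMPL\bot$ (resp. $(\varphi\AND\NOT\psi)\IMPL\bot$) by one (i$\IMPL$) that detaches the antecedent's conjuncts, applies (e$\IMPL$), and invokes the inductive (b) to reach $\bot$, then contradict $\POSS(\varphi\AND-\psi)$ (resp. $\NOT(\varphi\IMPL\psi)$, after applying $(\NOT{\IMPL}_1)$) with (e$\NOT$). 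The delicate points to check are exactly that every conditional and every $L$-formula relied on inside a round-bracket subproof is safe, so the restriction is never violated, and that the complexity count genuinely makes $\NOT\psi$ simpler than $\NOT(\varphi\IMPL\psi)$.
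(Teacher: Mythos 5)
Your proposal is correct and takes essentially the same approach as the paper: the same simultaneous induction covering $\varphi$ and $\NOT\varphi$ together, the same use of (CEM) instantiated at $\NOT\alpha$, at $\varphi \AND -\psi$, and at $\varphi \AND \NOT\psi$, the same (EFQ) trick inside the round-bracket (i$\IMPL$) subproofs, and the same safety bookkeeping justifying the import of conditionals and $L$-formulas into restricted subproofs. The only cosmetic deviations are that you run (a) and (b) as a single induction where the paper runs two separate ones, and your base case for (a) routes through $\alpha \cup {\sim}\alpha$ with (e$\cup$) where the paper instead uses (i$\sim$) and (e$\sim_2$); neither difference affects correctness.
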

\begin{proof}
(a) We will proceed by simultaneous induction on $\varphi$ and $\NOT \varphi$. In the derivations below, whenever we use a safe formula in a context in which it is required to use only safe formulas, we indicate this in the corresponding annotation.

1. Assume that $\alpha$ is from $L$. We will derive $\alpha \OR -\alpha$, i.e. $\alpha \OR \NOT (\NOT \alpha \IMPL \bot)$. The derivation of $\NOT \alpha \OR -\NOT \alpha$ is similar.

{\footnotesize \begin{itemize}
\item[]
\begin{fitch}
			(\NOT \alpha \IMPL \bot) \OR \NOT (\NOT \alpha \IMPL \bot) 		& (CEM) \\ 
\fh			 \NOT \alpha \IMPL \bot 						& hypothetical assumption \\			
\fa \fh 	{\sim}\alpha 									& hypothetical assumption\\			
\fa \fa \fh \alpha											& hypothetical assumption\\
\fa \fa \fa \bot											& 3,4, (e$\sim_1$)\\ 
\fa \fa \NOT \alpha										& 4-5, (i$\NOT$) \\
\fa \fa	\bot											& 2-safe,6, (e$\IMPL$) \\	
\fa 		{\sim}{\sim}\alpha								& 3-7, (i$\sim$)  \\			
\fa 		\alpha										& 8, (e$\sim_2$)\\ 
\fa 		\alpha \OR \NOT (\NOT \alpha \IMPL \bot)                                   & 9, (i$\OR_1$) \\			
\fh 		\NOT (\NOT \alpha \IMPL \bot)                                                     & hypothetical assumption \\			
\fa 		\alpha \OR \NOT (\NOT \alpha \IMPL \bot)                                   & 11, (i$\OR_2$)\\			
 		\alpha \OR \NOT (\NOT \alpha \IMPL \bot)					& 1,2-10,11-12, (e$\OR$)  \\	

\end{fitch}
\end{itemize}} 
In the cases 2.-5. we assume as an inductive hypothesis that our claim holds for $\varphi, \NOT \varphi, \psi, \NOT \psi$. 

2. It is easy to derive $\NOT \NOT \varphi \OR -\NOT \NOT \varphi$ from $\varphi \OR -\varphi$, by (e$\OR$), (i$\OR_1$), (i$\OR_2$), and ($\NOT\NOT_2$).
\newpage

3. We show that our claim holds for $\varphi \IMPL \psi$ and $\NOT (\varphi \IMPL \psi)$. First, we prove $\vdash (\varphi \IMPL \psi) \OR -(\varphi \IMPL \psi)$, i.e. $\vdash (\varphi \IMPL \psi) \OR \POSS (\varphi \AND -\psi)$, which can be done in the following way:

{\footnotesize\begin{itemize}
\item[]
\begin{fitch}
		((\varphi \AND -\psi) \IMPL \bot ) \OR \POSS (\varphi \AND -\psi)	&  (CEM) \\		
\fh		(\varphi \AND -\psi) \IMPL \bot	& hypothetical assumption \\
\fa \fh	\varphi 						& hypothetical assumption \\
\fa \fa	\psi \OR - \psi				& induction hypothesis \\
\fa \fa \fh	\psi						& hypothetical assumption \\
\fa \fa \fh	-- \psi						& hypothetical assumption		\\			 
\fa \fa \fa 	\varphi \AND - \psi				& 3,6, (i$\AND$)		\\	
\fa \fa \fa 	\bot						& 2-safe,7, (e$\IMPL$)		\\	
\fa \fa \fa 	\psi						& 8, (EFQ) 		\\	
\fa  \fa 	\psi						& 4,5,6-9, (e$\OR$)		\\	
\fa 		\varphi \IMPL \psi				& 3-10, (i$\IMPL$)		\\	
\fa		(\varphi \IMPL \psi) \OR \POSS (\varphi \AND -\psi)	& 11, (i$\OR_1$)	\\	
\fh		 \POSS (\varphi \AND -\psi)					& hypothetical assumption	\\	
\fa 		(\varphi \IMPL \psi) \OR \POSS (\varphi \AND -\psi)	& 13, (i$\OR_2$)	\\	
		(\varphi \IMPL \psi) \OR \POSS (\varphi \AND -\psi)	& 1,2-12,13-14, (e$\OR$)		\\			
\end{fitch}
\end{itemize}}

Now we prove $\vdash \NOT (\varphi \IMPL \psi) \OR - \NOT (\varphi \IMPL \psi)$, i.e. $\NOT (\varphi \IMPL \psi) \OR  (\varphi \IMPL - \NOT \psi)$:

{\footnotesize \begin{itemize}
\item[]
\begin{fitch}
		((\varphi \AND \NOT \psi) \IMPL \bot ) \OR \POSS (\varphi \AND \NOT \psi)	&  (CEM) \\		
\fh		(\varphi \AND \NOT \psi) \IMPL \bot	& hypothetical assumption \\
\fa \fh	\varphi							& hypothetical assumption \\
\fa \fa	\NOT \psi \OR -\NOT \psi			& induction hypothesis \\
\fa \fa \fh	\NOT \psi						& hypothetical assumption \\		 
\fa \fa \fa 	\varphi \AND \NOT \psi				& 3,5, (i$\AND$)		\\	
\fa \fa \fa 	\bot							& 2-safe,6, (e$\IMPL$)		\\	
\fa \fa \fa 	-- \NOT \psi						& 7, (EFQ)	\\	
\fa  \fa \fh 	-- \NOT \psi						& hypothetical assumption		\\
\fa \fa	-- \NOT \psi						& 4,5-8,9, (e$\OR$)		\\	
\fa		\varphi \IMPL -\NOT \psi			& 3-10, (i$\IMPL$)	\\	
\fa		\NOT (\varphi \IMPL \psi) \OR  (\varphi \IMPL - \NOT \psi)	& 11, (i$\OR_2$) 		\\	
\fh		 \POSS (\varphi \AND \NOT \psi)						& hypothetical assumption	\\	
\fa 		\NOT (\varphi \IMPL \psi)							& 13, ($\NOT{\IMPL}_2$)	\\	
\fa		\NOT (\varphi \IMPL \psi) \OR  (\varphi \IMPL - \NOT \psi)	& 14, (i$\OR_1$)	\\			
		\NOT (\varphi \IMPL \psi) \OR  (\varphi \IMPL - \NOT \psi)	& 1,2-12,13-15, (e$\OR$)		\\
\end{fitch}
\end{itemize}} 

\newpage

4. We prove that our claim holds for $\varphi \AND \psi$ and $\NOT (\varphi \AND \psi)$. First, we prove the former, i.e. $\vdash (\varphi \AND \psi) \OR (- \varphi \OR - \psi)$, which can be done by the following derivation:

{\footnotesize \begin{itemize}
\item[]
\begin{fitch}
					\varphi \OR - \varphi							& induction  hypothesis \\ 
\fh			 		\varphi		 										& hypothetical assumption	 \\			
\fa		 			\psi \OR - \psi										& induction  hypothesis \\	
\fa \fh			\psi													& hypothetical assumption	 \\	
\fa \fa			\varphi \AND \psi								& 2,4, (i$\AND$)  \\	
\fa \fa 			(\varphi \AND \psi) \OR (-\varphi \OR -\psi)		                 & 5, (i$\OR_1$) \\			
\fa \fh	  		-\psi							                                                     & hypothetical assumption	 \\	    
\fa \fa		 	(\varphi \AND \psi) \OR (-\varphi \OR -\psi)	                       & 7, (i$\OR_2$) (twice) \\
\fa				 	(\varphi \AND \psi) \OR (-\varphi \OR -\psi)	 					& 3, 4-6,7-8, (e$\OR$)  \\	
\fh				 	- \varphi																	& hypothetical assumption	   \\
\fa					(\varphi \AND \psi) \OR (-\varphi \OR -\psi)			    & 10, (i$\OR_1$), (i$\OR_2$) \\	
					(\varphi \AND \psi) \OR (-\varphi \OR -\psi)  		         & 1,2-9,10-11, (e$\OR$) \\			
\end{fitch}
\end{itemize}}

Now we prove that $\vdash \NOT (\varphi \AND \psi) \OR (-\NOT \varphi \AND -\NOT \psi)$, which can be done by the following derivation:

{\footnotesize \begin{itemize}
\item[]
\begin{fitch}
					\NOT \varphi \OR - \NOT \varphi							& induction  hypothesis \\ 
\fh			 		\NOT \varphi		 												& hypothetical assumption	 \\			
\fa		 			\NOT \varphi \OR \NOT \psi									& 2, (i$\OR_1$) \\	
\fa					\NOT (\varphi \AND \psi)										& 3, ($\NOT\AND_2$) \\	
\fa					\NOT (\varphi \AND \psi) \OR (-\NOT \varphi \AND -\NOT \psi)		& 4, (i$\OR_1$)  \\	
\fh		 			-\NOT \varphi		                 & hypothetical assumption	 \\			
\fa			  		\NOT \psi \OR - \NOT \psi							                                        & induction  hypothesis \\	    
\fa \fh		 	\NOT \psi                       & hypothetical assumption	 \\
\fa \fa 			\NOT \varphi \OR \NOT \psi									& 8, (i$\OR_2$) \\	
\fa \fa			\NOT (\varphi \AND \psi)													& 9, ($\NOT\AND_2$)  \\	
\fa \fa			\NOT (\varphi \AND \psi) \OR (-\NOT \varphi \AND -\NOT \psi) & 10, (i$\OR_1$) \\	
\fa \fh	  		- \NOT \psi							                                                     & hypothetical assumption	 \\ 
\fa	 \fa		 	 (-\NOT \varphi \AND -\NOT \psi)	 					& 6,12, (i$\AND$)  \\	
\fa \fa		 	\NOT (\varphi \AND \psi) \OR (-\NOT \varphi \AND -\NOT \psi)		& 13, (i$\OR_2$)   \\
\fa					\NOT (\varphi \AND \psi) \OR (-\NOT \varphi \AND -\NOT \psi) & 7,8-11,12-14, (e$\OR$)  \\	
					\NOT (\varphi \AND \psi) \OR (-\NOT \varphi \AND -\NOT \psi)  & 1,2-5,6-15, (e$\OR$) \\			
\end{fitch}
\end{itemize}} 

\newpage
5. The case of $\OR$ is analogous to the case of $\AND$. This finishes the proof of (a).

(b) We will proceed again by simultaneous induction on $\varphi$ and $\NOT \varphi$. 1. Assume that $\alpha$ is any $L$-formula. First, we will derive $\bot$ from $\alpha$ and $-\alpha = \NOT (\NOT \alpha \IMPL \bot)$. The derivation of $\bot$ from $\NOT \alpha$ and $-\NOT \alpha$ is similar.

{\footnotesize \begin{itemize}
\item[]
\begin{fitch}
			\alpha 																					& premise \\ 
			\NOT (\NOT \alpha \IMPL \bot) 												& premise \\	
\fh		 	\NOT \alpha 																			& hypothetical assumption  \\	
\fa			\bot																						& 1-safe,3, (e$\NOT$) \\	
	 		\NOT \alpha \IMPL \bot															& 2-3, (i$\IMPL$)  \\			
			\bot                                  & 2,5, (e$\NOT$) \\			
\end{fitch}
\end{itemize}} 
For the steps 3.-5., assume as an induction hypothesis that our claim holds for some arbitrary $\varphi, \NOT \varphi, \psi, \NOT \psi$. 2. It is easy to see that if we assume $\varphi, -\varphi \vdash \bot$, then also $\NOT \NOT \varphi, -\NOT \NOT \varphi \vdash \bot$, by ($\NOT\NOT_1$).  

3. We prove that the claim holds for $\varphi \IMPL \psi$, i.e. $\varphi \IMPL \psi, -(\varphi \IMPL \psi) \vdash \bot$. 

{\footnotesize \begin{itemize}
\item[]
\begin{fitch}
			\varphi \IMPL \psi 		& premise \\
			--(\varphi \IMPL \psi)=\NOT ((\varphi \AND - \psi ) \IMPL \bot) & premise \\
\fh			\varphi \AND - \psi 	& hypothetical assumption \\
\fa			\varphi 				& 3, (e$\AND_1$) \\
\fa			\psi 				& 1-safe,4, (e$\IMPL$) \\
\fa			-- \psi 				& 3, (e$\AND_2$) \\
\fa			\bot 				& 5,6, induction hypothesis\\
			(\varphi \AND - \psi) \IMPL \bot & 3-7, (i$\IMPL$) \\
			\bot				& 2,8, (e$\NOT$) \\
\end{fitch}
\end{itemize}}

Now we prove that $\NOT (\varphi \IMPL \psi), -\NOT (\varphi \IMPL \psi) \vdash \bot$. 

{\footnotesize \begin{itemize}
\item[]
\begin{fitch}
			\NOT (\varphi \IMPL \psi)							& premise \\
			-- \NOT (\varphi \IMPL \psi)= \varphi \IMPL - \NOT \psi 		& premise \\
\fh			\varphi \AND \NOT \psi 								& hypothetical assumption \\
\fa			\varphi 											& 3, (e$\AND_1$) \\
\fa			-- \NOT \psi 										& 2-safe,4, (e$\IMPL$) \\
\fa			\NOT \psi										& 3, (e$\AND_2$) \\
\fa			\bot 											& 5,6,   induction hypothesis\\
			(\varphi\AND \NOT \psi) \IMPL \bot 					& 3-7, (i$\IMPL$) \\
			\NOT ((\varphi \AND \NOT \psi) \IMPL \bot) 				& 1, ($\NOT{\IMPL}_1$) \\
			\bot											& 8,9, (e$\NOT$) \\
\end{fitch}
\end{itemize}}
\newpage
4. We prove that our claim holds for $\varphi \AND \psi$ and $\NOT (\varphi \AND \psi)$. First, we prove that $\varphi \AND \psi, -(\varphi \AND \psi) \vdash \bot$, which can be done by the following derivation:

{\footnotesize \begin{itemize}
\item[]
\begin{fitch}
					\varphi \AND \psi						& premise \\ 
			 		-\varphi \OR -\psi		 				& premise \\			
\fh		 			-\varphi										& hypothetical assumption\\	
\fa					\varphi										& 1, (e$\AND_1$) \\	
\fa					\bot											& 3,4, induction hypothesis  \\	
\fh 				-\psi		                 						& hypothetical assumption \\			
\fa			  		\psi							                  & 1, (e$\AND_1$) \\	    
\fa				 	\bot                       						& 6,7, induction hypothesis \\
				 	\bot	 										& 2,3-5,6-8, (e$\OR$)  \\	
\end{fitch}
\end{itemize}} 
Now we prove that $\NOT(\varphi \AND \psi), -\NOT (\varphi \AND \psi) \vdash \bot$, which can be done by the following derivation:

{\footnotesize \begin{itemize}
\item[]
\begin{fitch}
					\NOT (\varphi \AND \psi)								& premise \\ 
			 		-\NOT \varphi \AND -\NOT \psi		 				& premise \\			
			 		\NOT \varphi \OR \NOT \psi							& 1, ($\NOT\AND_1$) \\			
\fh		 			\NOT \varphi												& hypothetical assumption \\	
\fa					-\NOT \varphi												& 2, (e$\AND_1$) \\	
\fa					\bot															& 4,5, induction hypothesis  \\	
\fh 				\NOT \psi		                 								& hyp \\			
\fa			  		-\NOT \psi							                 			& 2, (e$\AND_2$) \\	    
\fa				 	\bot                       										& 7,8, induction hypothesis \\
				 	\bot	 														& 3,4-6,7-9, (e$\OR$)  \\	
\end{fitch}
\end{itemize}} 
5. The case of $\OR$ is analogous to the case of $\AND$. This finishes the proof of (b).
\end{proof}

\begin{lemma}\label{l: defined negation and consistency}
$\Delta \nvdash \varphi$ iff $\Delta \cup \{-\varphi\}$ is consistent.
\end{lemma}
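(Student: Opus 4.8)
The plan is to prove the contrapositive form of the stated equivalence, namely that $\Delta \vdash \varphi$ if and only if $\Delta, -\varphi \vdash \bot$, since a set is consistent exactly when it does not derive $\bot$. The lemma is essentially a corollary of Lemma~\ref{l: excluded middle}, which supplies an ``excluded middle'' and an ``explosion'' principle for the defined weak negation: part (a) gives $\vdash \varphi \OR -\varphi$ and part (b) gives $\varphi, -\varphi \vdash \bot$. Once that lemma is available, both directions reduce to short syntactic manipulations using only the structural properties of $\vdash$ (monotonicity and cut) together with the rules (e$\OR$) and (EFQ).

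For the direction $\Delta \vdash \varphi \Rightarrow \Delta, -\varphi \vdash \bot$, I would start from $\Delta \vdash \varphi$ and add $-\varphi$ as an extra assumption, obtaining $\Delta, -\varphi \vdash \varphi$ by monotonicity and $\Delta, -\varphi \vdash -\varphi$ trivially. Cutting these against Lemma~\ref{l: excluded middle}(b), which gives $\varphi, -\varphi \vdash \bot$, yields $\Delta, -\varphi \vdash \bot$. This is precisely the contrapositive of the implication ``if $\Delta \cup \{-\varphi\}$ is consistent then $\Delta \nvdash \varphi$''.

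For the converse $\Delta, -\varphi \vdash \bot \Rightarrow \Delta \vdash \varphi$, I would invoke Lemma~\ref{l: excluded middle}(a) to get $\vdash \varphi \OR -\varphi$, hence $\Delta \vdash \varphi \OR -\varphi$, and then apply the disjunction elimination rule (e$\OR$). In the first case subproof one assumes $\varphi$ and derives $\varphi$ immediately; in the second one assumes $-\varphi$, derives $\bot$ using the hypothesis $\Delta, -\varphi \vdash \bot$, and then applies (EFQ) to obtain $\varphi$. Both branches end in $\varphi$, so (e$\OR$) concludes $\Delta \vdash \varphi$, which is the contrapositive of ``if $\Delta \nvdash \varphi$ then $\Delta \cup \{-\varphi\}$ is consistent''.

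I do not expect a genuine obstacle here, since the hard simultaneous induction has already been carried out in Lemma~\ref{l: excluded middle}. The one point that requires care is the bracket discipline in the $-\varphi$-case of the converse: it is essential that (e$\OR$) uses \emph{square} rather than round brackets, so that the (possibly unsafe) formulas of $\Delta$ remain available under the hypothetical assumption $-\varphi$ when deriving $\bot$. Were the brackets round, only the safe members of $\Delta$ could be used and the reduction would break down; the square-bracket formulation of (e$\OR$) is exactly what makes the argument go through.
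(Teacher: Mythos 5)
Your proof is correct and takes essentially the same route as the paper's: both directions are derived directly from Lemma~\ref{l: excluded middle}, using part (a) together with (e$\OR$) and (EFQ) for one implication and part (b) with monotonicity and cut for the other. Your closing remark about the square-bracket (unrestricted) form of (e$\OR$) being what keeps the possibly unsafe formulas of $\Delta$ available is a correct and pertinent observation, though the paper leaves it implicit.
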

\begin{proof}
The left-to-right implication is obtained from Lemma \ref{l: excluded middle}-a, using the rules (e$\OR$) and (EFQ), and the right-to-left implication follows from Lemma \ref{l: excluded middle}-b.
\end{proof}

\begin{lemma}\label{l: consistent sets have model}
Every consistent finite set of $L^*$-formulas has a model.
\end{lemma}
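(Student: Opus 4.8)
The plan is to derive the existence of a model from consistency by means of a syntactic \emph{disjunctive normal form} built from the descriptor formulas $\mu_C$ and $\xi_X$ introduced for the functional completeness result. First I would fix a finite set $A$ of atomic formulas containing every atom occurring in $\Delta$ together with the atom designated in the definition of $\bot$; from then on everything takes place among the finitely many $A$-contexts. Since $\Delta$ is finite (and, without loss of generality, nonempty) I would replace it by the single formula $\gamma = \varphi_1 \AND \dots \AND \varphi_n$, where $\Delta = \{\varphi_1,\dots,\varphi_n\}$. Using (i$\AND$) and (e$\AND$) one checks that $\Delta \vdash \bot$ iff $\gamma \vdash \bot$, while the assertibility clause for $\AND$ gives that $C \Vdash^+ \gamma$ iff $C$ is a model of $\Delta$. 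Hence it suffices to show that if $\gamma \nvdash \bot$ then $C \Vdash^+ \gamma$ for some $A$-context $C$.

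The core step is a normal form lemma: for every $L^*$-formula $\varphi$ over $A$, writing $X_\varphi = \{C : C \Vdash^+ \varphi\}$ and $\xi_{X_\varphi} = \bigvee_{C \in X_\varphi} \mu_C$ (with the empty disjunction understood as $\bot$), both $\varphi \vdash \xi_{X_\varphi}$ and $\xi_{X_\varphi} \vdash \varphi$ hold. I would prove this by simultaneous induction on $\varphi$ and $\NOT \varphi$, exactly as in Lemma \ref{l: excluded middle}. The base case for an $L$-formula $\alpha$ (and its negation $\NOT\alpha$) is settled by classical reasoning about the extensional connectives together with the behaviour of $\oplus$ and the rule $(\POSS\oplus)$, which is what lets $\mu_C$ act as a faithful name of the context $C$. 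The inductive steps for $\AND$, $\OR$, $\IMPL$ and $\NOT$ use the matching introduction and elimination rules together with the interaction rules $(\NOT\NOT)$, $(\NOT\AND)$, $(\NOT\OR)$, $(\NOT{\IMPL})$, and, decisively, (CEM) and the weak-negation excluded middle of Lemma \ref{l: excluded middle}. This is the syntactic reconstruction of weak negation and functional completeness that the paper announces to be analogous to the completeness argument of \cite{puncochar15}.

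Granting the normal form lemma, the conclusion is immediate. Apply it to $\gamma$. If $X_\gamma = \emptyset$, then $\xi_{X_\gamma} = \bot$ and the direction $\gamma \vdash \xi_{X_\gamma}$ yields $\gamma \vdash \bot$, hence $\Delta \vdash \bot$, contradicting the consistency of $\Delta$. Therefore $X_\gamma \neq \emptyset$; choosing any $C \in X_\gamma$ gives $C \Vdash^+ \gamma$, and by the reduction above $C$ is a model of $\Delta$.

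The hard part will be the normal form lemma, and within it the clauses for $\IMPL$ and for $\NOT(\varphi \IMPL \psi)$. These are the non-persistent cases, where the deniability of an implication quantifies over subcontexts, so the round-bracket discipline of the calculus is essential. There (CEM), which supplies $(\varphi \IMPL \bot) \OR \POSS \varphi$, and $(\POSS\oplus)$, which assembles separately possible disjuncts into a single pragmatic disjunction describing an appropriate subcontext, carry the real weight. One must also verify the genuinely naming properties of the descriptors on the derivational side --- that distinct $\mu_C$, $\mu_D$ are jointly inconsistent and that the exhaustive disjunction $\bigvee_C \mu_C$ over all $A$-contexts is derivable --- which is the technical crux, handled just as in \cite{puncochar15}.
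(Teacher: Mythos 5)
Your proposal is correct in outline, and it is pitched at roughly the same level of detail as the paper's own proof, which is itself only a sketch deferring the technical core to \cite{puncochar15}. But you organize the argument around a different key lemma than the paper does. The paper's proof proceeds in two steps: (i) for consistent $\Delta$ over atoms $A$ there is some $A$-context $C$ such that $\Delta \cup \{\mu_C\}$ remains consistent (a Lindenbaum-style extension step), and (ii) each $\mu_C$ syntactically \emph{decides} every formula over $A$, in the purely syntactic dichotomy form $\mu_C \vdash \varphi$ or $\mu_C \vdash -\varphi$, proved by induction using $(\POSS\oplus)$. From (i), (ii) and Lemma \ref{l: excluded middle}(b) it follows that $\mu_C \vdash \psi$ for every $\psi \in \Delta$, and the paper then invokes \emph{soundness} of the rules together with $C \Vdash^+ \mu_C$ to transfer this to $C \Vdash^+ \psi$; soundness is what orients the syntactic dichotomy toward the semantics. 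You instead prove a single stronger statement, the provable normal form $\varphi \dashv \vdash \xi_{X_\varphi}$ with $X_\varphi$ defined semantically, from which model existence is read off directly, with no appeal to soundness and no separate extension step (your exhaustive disjunction $\bigvee_C \mu_C$ plays the role of the paper's step (i): if all $\Delta \cup \{\mu_C\}$ were inconsistent, (e$\OR$) would make $\Delta$ inconsistent). The two routes rest on exactly the same technical crux, which you correctly identify: the naming properties of the descriptors (mutual inconsistency of distinct $\mu_C$, derivability of the exhaustive disjunction via (CEM) and $(\POSS\oplus)$) and a simultaneous induction on $\varphi$ and $\NOT\varphi$ in which the $\IMPL$ and $\NOT(\varphi \IMPL \psi)$ cases require the round-bracket discipline (note that $\mu_C$ itself is unsafe, so inside an (i$\IMPL$) subproof only its extensional conjunct $\sigma_{w_1} \cup \ldots \cup \sigma_{w_n}$ is available, which is precisely what makes these cases delicate). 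What your packaging buys is a stronger structural by-product --- a syntactic disjunctive normal form theorem, i.e., functional completeness realized inside the calculus --- at the cost of entangling semantic case distinctions in the induction; what the paper's packaging buys is a cleaner separation, since its inductive claim (ii) is purely syntactic and the semantics enters only once, through soundness, at the very end.
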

\begin{proof}
We can give only a sketch of the proof here. The strategy is the same as in the analogous proofs in \cite{puncochar15} and \cite{puncochargauker20}. Let $A$ be a finite set of atomic formulas. Recall that for every $A$-context $C$ there is an $L^*$-formula $\mu_C$ characterizing $C$ in the following sense: For every $A$-context $D$, $D \Vdash^{+} \mu_{C}$ iff $D=C$. We can further prove that if $\Delta$ is a consistent set of $L^*$-formulas built out of the atomic formulas from $A$, then there is an $A$-context $C$ such that $\Delta \cup \left\{\mu_{C}\right\}$ is consistent. Moreover, we can prove by induction, crucially using the rule ($\POSS\oplus$), that for every $L^*$-formula $\varphi$ built out of the atomic formulas from $A$, either $\mu_{C} \vdash \varphi$, or $\mu_{C} \vdash -\varphi$. 

We can further reason as follows. Let $\Delta$  be a consistent finite set of $L^*$-formulas and let $A$  be the set of atomic formulas occurring in $\Delta$. Then there is an $A$-context $C$ such that $\Delta \cup \left\{\mu_{C}\right\}$ is consistent. It follows that $\mu_{C} \vdash \psi$, for every $\psi \in \Delta$. Since $C  \Vdash^{+} \mu_{C}$ it follows from soundness of the deductive rules that $C  \Vdash^{+} \psi$, for every $\psi \in \Delta$. Hence, $\Delta$ has a model.
\end{proof}

\begin{theorem}
$\varphi_1, \ldots, \varphi_n \vdash \psi$ \, iff\, $\varphi_1, \ldots, \varphi_n \vDash \psi$.
\end{theorem}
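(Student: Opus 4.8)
The plan is to prove the two directions of the biconditional separately. The left-to-right implication is soundness, and the right-to-left implication is completeness.

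For soundness I would argue by induction on the length of the derivation of $\psi$ from $\varphi_1, \ldots, \varphi_n$, checking that each inference rule preserves the semantic consequence relation. For the rules that carry formulas from the outer proof under a square-bracket assumption this is routine, since such an assumption does not change the context. The only delicate point concerns the round-bracket rules, namely (i$\supset$), (i$\sim$), (i$\IMPL$), (i$\NOT$), and the elimination rules (e$\cup$) and (e$\OR$): here a hypothetical assumption moves us to an arbitrary subcontext, so only the \emph{safe} formulas of the outer proof may be used inside the subproof. Soundness of these rules is exactly what Proposition \ref{l: persistence of safe formulas} secures, since it guarantees that safe formulas remain assertible in every subcontext. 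The representative cases (e$\cup$) and (e$\OR$) were already verified in the previous section, and the remaining cases are analogous; together they establish that $\varphi_1, \ldots, \varphi_n \vdash \psi$ implies $\varphi_1, \ldots, \varphi_n \vDash \psi$.

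For completeness I would argue by contraposition, merely assembling the lemmas of this section. Suppose $\varphi_1, \ldots, \varphi_n \nvdash \psi$. By Lemma \ref{l: defined negation and consistency} the set $\{\varphi_1, \ldots, \varphi_n\} \cup \{-\psi\}$ is consistent, and since it is finite, Lemma \ref{l: consistent sets have model} supplies a context $C$ that is a model of it. Finally, Lemma \ref{l: entailment and models}, which rests on Lemma \ref{l: ass of minus is the lack of ass}, identifies the existence of such a model with the failure $\varphi_1, \ldots, \varphi_n \nvDash \psi$. Chaining these three steps yields that $\varphi_1, \ldots, \varphi_n \nvdash \psi$ implies $\varphi_1, \ldots, \varphi_n \nvDash \psi$, which is precisely the contrapositive of the completeness half.

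The real content has been front-loaded into the lemmas, so the theorem itself is essentially a one-line assembly in each direction. If I had to name the main obstacle, it does not lie in this final argument but in the two results it consumes: Lemma \ref{l: excluded middle}, the syntactic reconstruction of the contextual excluded middle that underwrites the equivalence between unprovability and consistency of the $-\psi$-extension, and Lemma \ref{l: consistent sets have model}, the canonical-model construction that crucially leans on the rule ($\POSS\oplus$) together with the functional-completeness machinery built around the formulas $\mu_C$ and $\xi_X$. Once both are available, completeness follows by the short chain above and soundness by the routine rule-by-rule induction, so the remaining task reduces to recording these two implications.
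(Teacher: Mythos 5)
Your proposal is correct and takes essentially the same approach as the paper: completeness is exactly the chain of Lemmas \ref{l: defined negation and consistency}, \ref{l: consistent sets have model}, and \ref{l: entailment and models} applied in the contrapositive order you describe, and soundness is the rule-by-rule check sketched in the deductive-calculus section with persistence of safe formulas (Proposition \ref{l: persistence of safe formulas}) handling the restricted rules. One minor slip: (e$\OR$) is a square-bracket rule in the paper's calculus, not a round-bracket one, so it belongs with the routine cases rather than the delicate ones, but this does not affect your argument since you appeal to the paper's own verification of it.
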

\begin{proof}
Soundness was already discussed and completeness is obtained from Lemmas \ref{l: entailment and models}, \ref{l: defined negation and consistency}, and \ref{l: consistent sets have model}.
\end{proof}

\bibliographystyle{eptcs}
\bibliography{biblio}

\end{document}